\documentclass[a4paper,twoside,english,DIV12]{scrartcl}
\usepackage[T1]{fontenc}
\setcounter{secnumdepth}{5}
\setcounter{tocdepth}{5}
\setlength{\parskip}{\medskipamount}
\setlength{\parindent}{0pt}
\usepackage{verbatim}
\usepackage{mathtools}
\usepackage{amsthm}
\usepackage{amsmath}
\usepackage{amssymb}
\usepackage[numbers]{natbib}

\makeatletter

\pdfpageheight\paperheight
\pdfpagewidth\paperwidth

\newcommand{\noun}[1]{\textsc{#1}}

\theoremstyle{plain}
\newtheorem{thm}{\protect\theoremname}[section]
  \theoremstyle{definition}
  \newtheorem{defn}[thm]{\protect\definitionname}
  \theoremstyle{remark}
  \newtheorem{rem}[thm]{\protect\remarkname}
  \theoremstyle{plain}
  \newtheorem{cor}[thm]{\protect\corollaryname}


\usepackage[english]{babel}     
\selectlanguage{english}
\usepackage[T1]{fontenc}    

\usepackage{a4wide}        
\addtolength{\headheight}{2pt}
\usepackage{amsmath}
\usepackage{euscript}
\usepackage{url}
\usepackage{scrpage2}
\usepackage{tu-preprint}
\allowdisplaybreaks[1] 
\usepackage{amsfonts}

\newcommand{\hide}[1]{}




\newcommand*{\dive}{\operatorname{div}}

\newcommand*{\Grad}{\operatorname{Grad}}

\newcommand*{\Dive}{\operatorname{Div}}
\newcommand*{\sym}{\operatorname{sym}}

\newcommand*{\grad}{\operatorname{grad}}

\newcommand*{\ii}{\mathrm{i}}




\DeclareMathAccent{\Circ}{\mathalpha}{operators}{"17}
\newcommand{\interior}[1]{\Circ{#1}}
\renewcommand{\Im}{\operatorname{\mathfrak{Im}}}
\renewcommand{\Re}{\operatorname{\mathfrak{Re}}}


\newcommand{\oi}[2]{\left]#1,#2 \right[}
\newcommand{\rga}[1]{\left]#1,\infty  \right[}

\renewcommand*{\epsilon}{\varepsilon}
\renewcommand*{\rho}{\varrho}

\arraycolsep2pt

\institut{Institut f\"ur Analysis}

\preprintnumber{MATH-AN-04-2014}

\preprinttitle{On Some Models in Linear Thermo-Elasticity with Rational Material Laws.}

\author{S. Mukhopadhyay, R. Picard, S. Trostorff, M. Waurick}

\makeatother

\usepackage{babel}
  \providecommand{\corollaryname}{Corollary}
  \providecommand{\definitionname}{Definition}
  \providecommand{\remarkname}{Remark}
\providecommand{\theoremname}{Theorem}

\begin{document}
\makepreprinttitlepage

\title{On Some Models in Linear Thermo-Elasticity with Rational Material
Laws.}

\author{S. Mukhopadhyay%
\thanks{\emph{S. Mukhopadhyay:} Department of Mathematical Sciences, Indian
Institute of Technology (BHU), Varanasi -221005, India: Corresponding
author: E-mail: mukhosant.apm@iitbhu.ac.in %
}, R. Picard\textit{\emph{,}} S. Trostorff, M. Waurick\textit{\emph{}}%
\thanks{\emph{R. Picard}\textit{\emph{,}}\emph{ S. Trostorff, M. Waurick}\textit{\emph{:
Institute for Analysis, Faculty of Mathematics and Sciences, TU-Dresden,
Germany.}}%
}}

\maketitle
\setcounter{section}{-1}

\date{}

\textbf{\textit{Abstract: }}We shall consider some common models in
linear thermo-elasticity within a common structural framework. Due
to the flexibility of the structural perspective we will obtain well-posedness
results for a large class of generalized models allowing for more
general material properties such as anisotropies, inhomogeneities,
etc.

\section{Introduction}

The coupled dynamical thermoelasticity (CTE) theory was developed
by Biot \citep*{Biot1956} to eliminate the drawback of uncoupled
theory of thermoelasticity that the elastic changes in a material
have no effects on temperature. Like other classical thermodynamical
theories of continua, this theory is developed on the basis of firm
grounds of irreversible thermodynamics by employing Fourier's law
and has been used to study the coupling effects of elastic and thermal
fields over the years. However, this theory suffers from the paradox
of infinite heat propagation speed and predicts unsatisfactory descriptions
of a solid\textquoteright s response to some situations, like fast
transient loading at low temperature, etc. Generalized thermoelasticity
theories are therefore developed in last few decades with the aim
to eliminate this drawback. Extended thermoelasticity (ETE) theory
was introduced by Lord and Shulman \citep{Lord1967} by employing
a modified Fourier law proposed by Catteneo \citep{cattaneo1958}
and Vernotte \citep{Vernotte1958,Vernotte1961} that includes one
thermal relaxation time parameter. Temperature-rate dependent thermoelasticity
(TRDTE) theory by Green and Lindsay \citep{Green_Lindsay1972} and
thermoelasticity theories of type I, II and III by Green and Naghdi
\citep{Green_Naghdi1991,Green_Naghdi1992,Green_Naghdi1993} are also
advocated in this context. Later on, Chandrasekharaiah \citep{Chandra1998}
modified the governing equations of thermoelasticity on the basis
of a so-called dual phase-lag heat conduction equation due to Tzou
\citep{Tzou1995,Tzou1995_2} and proposed two different models of
thermoelasticity, namely dual phase-lag model- I (DPL-I) and dual
phase-lag model- II (DPL-II). The dual phase lag heat conduction law
is supposed to be the macroscopic formulation of the microscopic effects
in heat transport processes. A possible application of this generalized
heat conduction law arises in the modeling of laser pulses. It has
been found out that laser pulses can be shortened to the range of
femtoseconds ($10^{-15}$ s). When the response time becomes shorter,
the non-equilibrium thermodynamic transition and the microscopic effects
in the energy exchange during heat transport procedure become pronounced
(Tzou \citep{tzou1996macro}). The formulation therefore becomes microscopic
in nature. The dual phase-lag heat conduction law incorporates this
microscopic effects in heat transport process by introducing two macroscopic
lagging (or delayed) responses as possible outcomes. A detailed history
about the development of some well- established non-Fourier heat conduction
models and their importance are available in the references \citep{Chandra1986,Chandra1998,Hetnarski1999,hetnarski2008thermal,Ignaczak1989,Ignaczak1991,Ignaczak2010,Joseph1989,Quintanilla2006,Kothari2013,mukhopadhyay2014dual}. 

Recently, a structural formulation for linear material laws in classical
mathematical physics was reported by Picard \citep{Picard2009}. Here,
a class of evolutionary problems is considered to cover a number of
initial boundary value problems of classical mathematical physics
and the solution theory is established. The well-posedness of classical
thermoelasticity and Lord Shulman theory was shown to be covered by
this model. The main objective of this present work is to show that
the aforementioned models of generalized thermoelasticity can be treated
within the common structural framework of evolutionary equations.
Due to the flexibility of the structural perspective we will obtain
well-posedness results for a large class of generalized models allowing
for more general material properties such as anisotropies, inhomogeneities,
etc. The solution strategy is not based on constructions involving
fundamental solutions (semi-group theory), which will allow for even
more general materials resulting for example in changes of type (e.g.
from parabolic to hyperbolic) or for suitable non-local material properties
involving e.g. spatial integral operators. It should be noted that
evolutionary equations in the form just discussed have also been studied
with regards to homogenization theory, see e.g.~\citep{Waurick2012,Waurick2012a,Waurick2013}.
Hence, the general perspective on thermo-elasticity to be presented
may also shed some new light on the theory of homogenization of such
models.

The article is structured as follows. We begin to introduce the framework
of evolutionary equations and recall the general well-posedness result.
We will focus on so-called rational material laws defined as functions
of the time-derivative $\partial_{0}$, which is established as a
normal operator in a suitable exponentially weighted $L^{2}$-space.
In the proceeding sections we will show, how the different models
of generalized elasticity can be incorporated into this framework
and we will derive assumptions on the material coefficients yielding
the well-posedness of the corresponding evolutionary equations.

\section{Foundations}

\subsection{The framework of evolutionary equations}

The family of Hilbert spaces $\left(H_{\rho,0}\left(\mathbb{R},H\right)\right)_{\rho\in\mathbb{R}}$,
$H$ complex Hilbert space, with $H_{\rho,0}\left(\mathbb{R},H\right)\coloneqq L^{2}\left(\mathbb{R},\mu_{\rho},H\right)$,
where the measure $\mu_{\rho}$ is defined by $\mu_{\rho}\left(S\right)\coloneqq\int_{S}\exp\left(-2\rho t\right)\:\mathrm{d}t$,
$S\subseteq\mathbb{R}$ a Borel set, $\rho\in\mathbb{R}$, provides
the desired Hilbert space setting for evolutionary problems (cf. \citep{Picard_PDE,Kalauch2014}).
The sign of $\rho$ is associated with the direction of causality,
where the positive sign is linked to forward causality. Since we have
a preference for forward causality, we shall usually assume that $\rho\in\rga0$.
By construction of these spaces, we can establish
\begin{align*}
\exp\left(-\rho\mathbf{m}_{0}\right):H_{\rho,0}\left(\mathbb{R},H\right) & \to H_{0,0}\left(\mathbb{R},H\right)(=L^{2}\left(\mathbb{R},H\right))\\
\varphi & \mapsto\exp\left(-\rho\mathbf{m}_{0}\right)\varphi
\end{align*}
where $\left(\exp\left(-\rho\mathbf{m}_{0}\right)\varphi\right)\left(t\right)\coloneqq\exp\left(-\rho t\right)\varphi\left(t\right)$,
$t\in\mathbb{R}$, as a unitary mapping. We use $\mathbf{m}_{0}$
as a notation for the multiplication-by-argument operator corresponding
to the time parameter.

In this Hilbert space setting the time-derivative operation, defined
as the closure of 
\begin{align*}
\interior C_{\infty}(\mathbb{R},H)\subseteq H_{\rho,0}(\mathbb{R},H) & \to H_{\rho,0}(\mathbb{R},H)\\
\varphi & \mapsto\dot{\varphi},
\end{align*}
where by $\interior C_{\infty}(\mathbb{R},H)$ we denote the space
of arbitrary differentiable functions from $\mathbb{R}$ to $H$ having
compact support, generates a normal operator $\partial_{0,\rho}$
with%
\footnote{Recall that for normal operators $N$ in a Hilbert space $H$
\begin{align*}
\Re N & \coloneqq\frac{1}{2}\overline{\left(N+N^{*}\right)},\\
\Im N & \coloneqq\frac{1}{2\ii}\overline{\left(N-N^{*}\right)}
\end{align*}
and
\[
N=\Re N+\ii\Im N.
\]
It is
\[
D\left(N\right)=D\left(\Re N\right)\cap D\left(\Im N\right).
\]
}
\begin{align*}
\Re\partial_{0,\rho} & =\rho,\\
\Im\partial_{0,\rho} & =\frac{1}{\ii}\left(\partial_{0,\rho}-\rho\right).
\end{align*}
The skew-selfadjoint operator $\ii\Im\partial_{0,\rho}$ is unitarily
equivalent to the differentiation operator $\partial_{0,0}$ in $L^{2}\left(\mathbb{R},H\right)=H_{0}\left(\mathbb{R},H\right)$
with domain $H^{1}(\mathbb{R},H)$ - the space of weakly differentiable
functions in $L^{2}(\mathbb{R},H)$ - via
\[
\ii\Im\partial_{0,\rho}=\left(\exp\left(-\rho\mathbf{m}_{0}\right)\right)^{-1}\partial_{0,0}\exp\left(-\rho\mathbf{m}_{0}\right)
\]
and has the Fourier-Laplace transformation as its spectral representation,
which is the unitary transformation
\[
\mathcal{L}_{\rho}\coloneqq\mathcal{F}\:\exp\left(-\rho\mathbf{m}_{0}\right):H_{\rho,0}(\mathbb{R},H)\to L^{2}(\mathbb{R},H),
\]
where $\mathcal{F}:L^{2}(\mathbb{R},H)\to L^{2}(\mathbb{R},H)$ is
the Fourier transformation given as the unitary extension of 
\[
\interior C_{\infty}(\mathbb{R},H)\ni\varphi\mapsto\left(s\mapsto\frac{1}{\sqrt{2\pi}}\intop_{\mathbb{R}}\exp(-\ii s\cdot t)\varphi(t)\,\mathrm{d}t\right).
\]
Indeed, this follows from the well-known fact that $\mathcal{F}$
is unitary in $L^{2}\left(\mathbb{R},H\right)$ and a spectral representation
for $\frac{1}{\ii}\partial_{0,0}$ in $L^{2}\left(\mathbb{R},H\right)$.
In particular, we have 
\begin{align*}
\Im\partial_{0,\rho} & =\mathcal{L}_{\rho}^{*}\mathbf{m}_{0}\mathcal{L}_{\rho}
\end{align*}
and thus, 
\[
\partial_{0,\rho}=\mathcal{L}_{\rho}^{*}\left(\ii\mathbf{m}_{0}+\rho\right)\mathcal{L}_{\rho}.
\]

It is crucial to note that for $\rho\not=0$ we have that $\partial_{0,\rho}$
has a bounded inverse. Indeed, for $\rho>0$ we find from $\Re\partial_{0,\rho}=\rho$
that
\begin{equation}
\left\Vert \partial_{0,\rho}^{-1}\right\Vert _{\rho,0}=\frac{1}{\rho},\label{eq:time-integral-norm-1}
\end{equation}
where $\left\Vert \:\cdot\;\right\Vert _{\rho,0}$, $\rho\in\rga0$
denotes the operator norm on $H_{\rho,0}(\mathbb{R},H)$. For continuous
functions $\varphi$ with compact support we find
\begin{equation}
\left(\partial_{0,\rho}^{-1}\varphi\right)\left(t\right)=\int_{-\infty}^{t}\varphi\left(s\right)\:\mathrm{d}s,\: t\in\mathbb{R},\rho\in\oi{0}{\infty},\label{eq:integral-1}
\end{equation}

which shows the causality of $\partial_{0,\rho}^{-1}$ for $\rho>0$.%
\footnote{If $\rho<0$ the operator $\partial_{0,\rho}$ is also boundedly invertible
and its inverse is given by 
\[
\left(\partial_{0,\rho}^{-1}\varphi\right)(t)=-\intop_{t}^{\infty}\varphi(s)\, ds\quad(t\in\mathbb{R})
\]
for all $\varphi\in\interior C_{\infty}(\mathbb{R},H)$. Thus, $\rho<0$
corresponds to the backward causal (or anticausal) case.%
} Since it is usually clear from the context which $\rho$ has been
chosen, we shall, as it is customary, drop the index $\rho$ from
the notation for the time derivative and simply use $\partial_{0}$
instead of $\partial_{0,\rho}$.\\
We are now able to define operator-valued functions of $\partial_{0}$
via the induced function calculus of $\partial_{0}^{-1}$ as
\[
M\left(\partial_{0}^{-1}\right)\coloneqq\mathcal{L}_{\rho}^{*}M\left(\left(\ii\mathbf{m}_{0}+\rho\right)^{-1}\right)\mathcal{L}_{\rho}.
\]
Here, we require that $z\mapsto M\left(z\right)$ is a bounded, analytic
function defined on $B_{\mathbb{C}}\left(\frac{1}{2\rho_{0}},\frac{1}{2\rho_{0}}\right)$
for some $\rho_{0}\in\oi0\infty$ attaining values in $L(H)$, the
space of bounded linear operators on $H$. Then, for $\rho>\rho_{0}$
the operator $M\left(\left(\ii\mathbf{m}_{0}+\rho\right)^{-1}\right)$
defined as 
\[
\left(M\left(\left(\ii\mathbf{m}_{0}+\rho\right)^{-1}\right)f\right)(t)\coloneqq M\left(\left(\ii t+\rho\right)^{-1}\right)f(t)\quad(t\in\mathbb{R},f\in L^{2}(\mathbb{R},H))
\]
is bounded and linear, and hence $M(\partial_{0}^{-1})\in L(H_{\rho,0}(\mathbb{R},H)).$
Moreover, due to the analyticity of $M$ we obtain that $M(\partial_{0}^{-1})$
becomes causal (see \citep[Theorem 2.10]{Picard2009}). \\
We recall from \citep{Picard2009} (and the concluding chapter of
\citep{Picard_PDE}) that the common form of standard initial boundary
value problems of mathematical physics is given by
\begin{equation}
\overline{\left(\partial_{0}M\left(\partial_{0}^{-1}\right)+A\right)}U=F,\label{eq:evo2-1}
\end{equation}
where $A$ is the canonical skew-selfadjoint extension to $H_{\rho,0}\left(\mathbb{R},H\right)$
of a skew-selfadjoint operator in $H$. We recall the well-posedness
result for this class of problems.
\begin{thm}[{\citep[Solution Theory]{Picard2009}}]
\label{thm:sol_th} Let $A:D(A)\subseteq H\to H$ be a skew-selfadjoint
operator and $M:B\left(\frac{1}{2\rho_{0}},\frac{1}{2\rho_{0}}\right)\to L(H)$
an analytic and bounded mapping, where $\rho_{0}\in\oi0\infty.$ Assume
that there is $c\in\oi0\infty$ such that for all $z\in B\left(\frac{1}{2\rho_{0}},\frac{1}{2\rho_{0}}\right)$
the estimate 
\begin{equation}
\Re z^{-1}M(z)=\frac{1}{2}\left(z^{-1}M(z)+\left(z^{-1}\right)^{*}M(z)^{*}\right)\geq c\label{eq:pos_def}
\end{equation}
holds. For $\rho>\rho_{0}$ we denote the canonical extension of $A$
to $H_{\rho,0}(\mathbb{R},H)$ again by $A$. Then the evolutionary
problem 
\[
\overline{\left(\partial_{0}M\left(\partial_{0}^{-1}\right)+A\right)}U=F
\]
is well-posed in the sense that $\overline{\left(\partial_{0}M\left(\partial_{0}^{-1}\right)+A\right)}$
has a bounded inverse on $H_{\rho,0}(\mathbb{R},H)$. Moreover, the
inverse is causal.
\end{thm}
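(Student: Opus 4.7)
I would transport the operator $\partial_{0}M(\partial_{0}^{-1})$ to the Fourier--Laplace side through the spectral representation $\partial_{0}=\mathcal{L}_{\rho}^{*}(\ii\mathbf{m}_{0}+\rho)\mathcal{L}_{\rho}$, turn assumption \eqref{eq:pos_def} into a strict accretivity estimate on a suitable core, and combine it with the skew-selfadjointness of the canonical extension of $A$ to derive a continuous a priori bound. Continuous invertibility of the closure then follows by standard density/closed-range arguments, and causality is inherited from the functional calculus via \citep[Theorem~2.10]{Picard2009}.

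\textbf{Accretivity on a core.} On the common core $\interior C_{\infty}(\mathbb{R},D(A))$ (which is a core for both $\partial_{0}$ and the pointwise extension of $A$), Plancherel's identity together with the definitions of $M(\partial_{0}^{-1})$ and $\partial_{0}$ yields, for $u$ in this core with $\hat u\coloneqq\mathcal{L}_{\rho}u$,
\[
\Re\langle\partial_{0}M(\partial_{0}^{-1})u,u\rangle_{\rho,0}=\int_{\mathbb{R}}\Re\bigl\langle(\ii t+\rho)M((\ii t+\rho)^{-1})\hat u(t),\hat u(t)\bigr\rangle_{H}\,\mathrm{d}t.
\]
For $\rho>\rho_{0}$ and $t\in\mathbb{R}$ the point $z=(\ii t+\rho)^{-1}$ lies in $B\left(\frac{1}{2\rho_{0}},\frac{1}{2\rho_{0}}\right)$, and the integrand is pointwise bounded below by $c\,\|\hat u(t)\|_{H}^{2}$ via \eqref{eq:pos_def}. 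Hence $\Re\langle\partial_{0}M(\partial_{0}^{-1})u,u\rangle_{\rho,0}\geq c\|u\|_{\rho,0}^{2}$. Since $A$ extends pointwise to a skew-selfadjoint operator on $H_{\rho,0}(\mathbb{R},H)$, we have $\Re\langle Au,u\rangle_{\rho,0}=0$ on that same core, so Cauchy--Schwarz gives
\[
\bigl\|(\partial_{0}M(\partial_{0}^{-1})+A)u\bigr\|_{\rho,0}\ \geq\ c\,\|u\|_{\rho,0}.
\]

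\textbf{Closure, density of range, and causality.} The estimate passes to the closure $B\coloneqq\overline{\partial_{0}M(\partial_{0}^{-1})+A}$, which is therefore injective with closed range and satisfies $\|B^{-1}\|_{\rho,0}\leq 1/c$ on $R(B)$. To upgrade to surjectivity I would show $R(B)$ dense, equivalently $B^{*}$ injective: on the Fourier--Laplace side $(\partial_{0}M(\partial_{0}^{-1}))^{*}$ acts as pointwise multiplication by $\overline{(\ii t+\rho)M((\ii t+\rho)^{-1})}$, whose Hermitian part coincides with $\Re(z^{-1}M(z))\geq c$; combined with $A^{*}=-A$ this reruns the accretivity argument on a core of $B^{*}$ and forces $\|B^{*}v\|_{\rho,0}\geq c\|v\|_{\rho,0}$, so $B$ is bijective with bounded inverse. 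For causality, the map $z\mapsto(z^{-1}M(z)+A)^{-1}z^{-1}$ is analytic and bounded (by $1/c$) from $B\left(\frac{1}{2\rho_{0}},\frac{1}{2\rho_{0}}\right)$ into $L(H_{\rho,0})$, so $B^{-1}$ sits within the $\partial_{0}^{-1}$-functional calculus and inherits causality from \citep[Theorem~2.10]{Picard2009}.

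\textbf{Main obstacle.} The delicate step is the adjoint computation, since $\partial_{0}M(\partial_{0}^{-1})$ is not normal: one has to identify a core on which the adjoint genuinely acts as the claimed multiplier on the Fourier--Laplace side, so that the accretivity bound transfers cleanly to $B^{*}$ and with it the density of the range. Once this transfer is in place, the proof is routine Hilbert-space functional analysis.
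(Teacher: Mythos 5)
The paper does not prove this theorem: it is imported verbatim from \citep{Picard2009} (with a pointer to \citep{Picard_PDE}) and used as a black box, so there is no in-paper proof to compare against. Your sketch reproduces the standard argument of that reference --- strict accretivity of $\partial_{0}M(\partial_{0}^{-1})+A$ via the Fourier--Laplace multiplier representation and skew-selfadjointness of $A$, the matching estimate for the adjoint to get density of the range, and causality from the analytic functional calculus --- and you correctly flag the genuinely delicate step, namely justifying the multiplier form of the adjoint on a core of $B^{*}$, which in the reference is handled by mollification in time. One slip in the causality paragraph: the relevant family is $z\mapsto\left(z^{-1}M(z)+A\right)^{-1}$, which takes values in $L(H)$ (not $L(H_{\rho,0})$) and is uniformly bounded by $1/c$; the extra factor $z^{-1}$ you append turns it into $\left(M(z)+zA\right)^{-1}$, whose norm blows up like $|z|^{-1}$ near $z=0$ and which is not the solution operator.
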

For the models under consideration it suffices to consider $M$ as
a rational, bounded-operator-valued function, which, possibly by eliminating
removable singularities, is analytic at $0$ (in \citep{Picard2009}
these material laws are called $0-$analytic). This means in particular
that $M$ can be factorized in the form
\begin{equation}
M\left(z\right)=\prod_{k=0}^{s}Q_{k}\left(z\right)^{-1}P_{k}\left(z\right)\label{eq:rational0-1}
\end{equation}
where $P_{k},Q_{k}$ are polynomials.%
\footnote{The form $U=M\left(\partial_{0}^{-1}\right)V$ may be interpreted
as coming from solving an integro-differential equation of the form
\[
\partial_{0}^{N}Q\left(\partial_{0}^{-1}\right)U=\partial_{0}^{N}P\left(\partial_{0}^{-1}\right)V,
\]
where $N\in\mathbb{N}$ is the degree of the operator polynomial $Q$.%
} In this case, condition (\ref{eq:pos_def}) simplifies to
\begin{equation}
\rho M\left(0\right)+\Re M^{\prime}\left(0\right)\geq c\label{eq:pos-def2-1}
\end{equation}
for some $c>0$ and all sufficiently large $\rho>0$. Indeed, the
only difference between the expression in \eqref{eq:pos-def2-1} and
\eqref{eq:pos_def} are terms multiplied by a multiple of $\left|z\right|=\left|\frac{1}{it+\rho}\right|$,
which are eventually small, if $\rho>0$ is chosen sufficiently large.
A finer classification of these models can be obtained by looking
at the (unbounded) linear operator $A$ and the ``zero patterns''
of $M\left(0\right)$ and $\mathrm{\Re}M^{\prime}\left(0\right)$.

\subsection{The equations of thermo-elasticity}

We start with the classical equations of irreversible thermo-elasticity
in an elastic body $\Omega\subseteq\mathbb{R}^{3}$ due to Biot \citep{Biot1956}.
Before we can formulate these equations properly, we need to define
the spatial differential operators involved. 
\begin{defn}
We define the operator $\interior\grad$ as the closure of 
\begin{align*}
\grad|_{\interior C_{\infty}(\Omega)}:\interior C_{\infty}(\Omega)\subseteq L^{2}(\Omega) & \to L^{2}(\Omega)^{3}\\
\phi & \mapsto\left(\partial_{1}\phi,\partial_{2}\phi,\partial_{3}\phi\right),
\end{align*}
where we recall that $\interior C_{\infty}(\Omega)$ denotes the space
of smooth functions with compact support in $\Omega.$ Likewise we
define $\interior\dive$ as the closure of 
\begin{align*}
\dive|_{\interior C_{\infty}(\Omega)^{3}}:\interior C_{\infty}(\Omega)^{3}\subseteq L^{2}(\Omega)^{3} & \to L^{2}(\Omega)\\
(\phi_{1},\phi_{2},\phi_{3}) & \mapsto\sum_{i=1}^{3}\partial_{i}\phi_{i}.
\end{align*}
Integration by parts yield $\interior\grad\subseteq-\left(\interior\dive\right)^{\ast}\eqqcolon\grad$
and, similarly, $\interior\dive\subseteq-\left(\interior\grad\right)^{\ast}\eqqcolon\dive.$
\\
Moreover, we define the operator 
\begin{align*}
\sym:L^{2}(\Omega)^{3\times3} & \to L^{2}(\Omega)^{3\times3}\\
\Phi & \mapsto\left(x\mapsto\frac{1}{2}\left(\Phi(x)+\Phi(x)^{\top}\right)\right),
\end{align*}
which clearly is the orthogonal projector onto the closed subspace
\[
L_{\mathrm{sym}}^{2}(\Omega)^{3\times3}\coloneqq\left\{ \Phi\in L^{2}(\Omega)^{3\times3}\,|\,\Phi(x)=\Phi(x)^{\top}\quad(x\in\Omega\mbox{ a.e.})\right\} 
\]
 of $L^{2}(\Omega)^{3\times3}.$ Similar to the definition above,
we define the operator $\interior\Grad$ as the closure of 
\begin{align*}
\Grad|_{\interior C_{\infty}(\Omega)^{3}}:\interior C_{\infty}(\Omega)^{3}\subseteq L^{2}(\Omega)^{3} & \to L_{\mathrm{sym}}^{2}(\Omega)^{3\times3}\\
(\phi_{1},\phi_{2},\phi_{3}) & \mapsto\left(\frac{1}{2}\left(\partial_{j}\phi_{i}+\partial_{i}\phi_{j}\right)\right)_{i,j\in\{1,2,3\}}
\end{align*}
and $\interior\Dive$ as the closure of 
\begin{align*}
\Dive|_{\sym[\interior C_{\infty}(\Omega)^{3\times3}]}:\sym[\interior C_{\infty}(\Omega)^{3\times3}]\subseteq L_{\mathrm{sym}}^{2}(\Omega)^{3\times3} & \to L^{2}(\Omega)^{3}\\
\left(\phi_{ij}\right)_{i,j\in\{1,2,3\}} & \mapsto\left(\sum_{j=1}^{3}\partial_{j}\phi_{ij}\right)_{i\in\{1,2,3\}}.
\end{align*}
By integration by parts we again obtain $\interior\Grad\subseteq-\left(\interior\Dive\right)^{\ast}\eqqcolon\Grad$
as well as $\interior\Dive\subseteq-\left(\interior\Grad\right)^{\ast}\eqqcolon\Dive$. 
\end{defn}
We are now able to formulate the equations of thermo-elasticity. Let
$u\in H_{\rho,0}(\mathbb{R},L^{2}(\Omega)^{3})$ denote the displacement-field
of the elastic body $\Omega$ and $\sigma\in H_{\rho,0}(\mathbb{R},L_{\mathrm{sym}}^{2}(\Omega)^{3\times3})$
the stress. Then $u$ and $\sigma$ satisfy the balance of momentum
equation 
\begin{equation}
\rho_{0}\partial_{0}^{2}u-\Dive\sigma=f,\label{eq:elast}
\end{equation}
where $\rho_{0}\in L^{\infty}(\Omega)$ denotes the mass-density of
$\Omega$ and $f\in H_{\rho,0}(\mathbb{R},L^{2}(\Omega)^{3})$ is
an external forcing term. Furthermore, let $\eta\in H_{\rho,0}(\mathbb{R},L^{2}(\Omega))$
denote the entropy and $q\in H_{\rho,0}(\mathbb{R},L^{2}(\Omega)^{3})$
the heat flux. Then, these quantities satisfy the conservation law
\begin{equation}
\rho_{0}\partial_{0}\eta+\dive(T_{0}^{-1}q)=T_{0}^{-1}h,\label{eq:heat}
\end{equation}
where $T_{0}$ denotes the reference temperature%
\footnote{For simplicity we have set the reference temperature $T_{0}$ in the
introduction (and also later on) to $T_{0}=1$. In equation \eqref{eq:heat}
we let $T_{0}\in\oi0\infty$ be arbitrary to keep the formulation
more comparable with the classically proposed models.%
} and $h\in H_{\rho,0}(\mathbb{R},L^{2}(\Omega))$ is a heating source
term. The equations are completed by the following relations
\begin{align}
\sigma & =C\varepsilon-\Gamma\theta,\label{eq:Hooke}\\
\rho_{0}\eta & =\Gamma^{\ast}\varepsilon+\nu\theta,\label{eq:entropy}\\
q & =-\kappa\grad\theta.\label{eq:Fourier}
\end{align}
Here $\varepsilon=\Grad u$ is the strain, $\theta\in H_{\rho,0}(\mathbb{R},L^{2}(\Omega))$
denotes the temperature, $C\in L(L_{\mathrm{sym}}^{2}(\Omega)^{3\times3})$
is the elasticity tensor, $\nu\in L^{\infty}(\Omega)$ stands for
the specific heat, $\kappa\in L^{\infty}(\Omega)$ denotes the thermal
conductivity and $\Gamma\in L(L^{2}(\Omega),L_{\mathrm{sym}}^{2}(\Omega)^{3\times3})$
is the thermo-elasticity tensor that results from the Duhamel-Neumann
law linking the stress to strain and temperature. Assuming that $C$
is invertible, we may rewrite (\ref{eq:Hooke}) as 
\begin{equation}
\varepsilon=C^{-1}\sigma+C^{-1}\Gamma\theta.\label{eq:mod-Hooke-2}
\end{equation}
Consequently, (\ref{eq:entropy}) can be written as 
\begin{equation}
\rho_{0}\eta=\Gamma^{\ast}C^{-1}\sigma+\left(\Gamma^{\ast}C^{-1}\Gamma+\nu\right)\theta\label{eq:mod-entropy}
\end{equation}
and, hence, with $v\coloneqq\partial_{0}u,\sigma,\theta$ and $q$
as our basic unknowns, (\ref{eq:elast}),(\ref{eq:Fourier}),(\ref{eq:mod-Hooke-2})
and (\ref{eq:mod-entropy}) can be combined to the following equations
on $H_{\rho,0}(\mathbb{R},H),$ where $H\coloneqq L^{2}(\Omega)^{3}\oplus L_{\mathrm{sym}}^{2}(\Omega)^{3\times3}\oplus L^{2}(\Omega)\oplus L^{2}(\Omega)^{3}$:
\begin{multline*}
\left(\partial_{0}\left(\begin{array}{cccc}
\rho_{0} & 0 & 0 & 0\\
0 & C^{-1} & C^{-1}\Gamma & 0\\
0 & \Gamma^{\ast}C^{-1} & \Gamma^{\ast}C^{-1}\Gamma+\nu & 0\\
0 & 0 & 0 & 0
\end{array}\right)+\left(\begin{array}{cccc}
0 & 0 & 0 & 0\\
0 & 0 & 0 & 0\\
0 & 0 & 0 & 0\\
0 & 0 & 0 & \kappa^{-1}
\end{array}\right)\right.\\
\left.+\left(\begin{array}{cccc}
0 & -\Dive & 0 & 0\\
-\Grad & 0 & 0 & 0\\
0 & 0 & 0 & \dive\\
0 & 0 & \grad & 0
\end{array}\right)\right)\left(\begin{array}{c}
v\\
\sigma\\
\theta\\
q
\end{array}\right)=\left(\begin{array}{c}
f\\
0\\
h\\
0
\end{array}\right).
\end{multline*}

This systems is, at least formally, of the form (\ref{eq:evo2-1}),
where $M(\partial_{0}^{-1})=M_{0}+\partial_{0}^{-1}M_{1}$ with 
\[
M_{0}=\left(\begin{array}{cccc}
\rho_{0} & 0 & 0 & 0\\
0 & C^{-1} & C^{-1}\Gamma & 0\\
0 & \Gamma^{\ast}C^{-1} & \Gamma^{\ast}C^{-1}\Gamma+\nu & 0\\
0 & 0 & 0 & 0
\end{array}\right),\; M_{1}=\left(\begin{array}{cccc}
0 & 0 & 0 & 0\\
0 & 0 & 0 & 0\\
0 & 0 & 0 & 0\\
0 & 0 & 0 & \kappa^{-1}
\end{array}\right)
\]
and 
\begin{equation}
A=\left(\begin{array}{cccc}
0 & -\Dive & 0 & 0\\
-\Grad & 0 & 0 & 0\\
0 & 0 & 0 & \dive\\
0 & 0 & \grad & 0
\end{array}\right).\label{eq:A_without}
\end{equation}
To make $A$ become skew-selfadjoint, we need to impose boundary conditions
on our unknowns. For instance, one could require homogeneous Dirichlet-conditions
for $v$ and $\theta,$ which can be formulated by $v\in D(\interior\Grad)$
and $\theta\in D(\interior\grad).$ Then, $A$ becomes 
\begin{equation}
A=\left(\begin{array}{cccc}
0 & -\Dive & 0 & 0\\
-\interior\Grad & 0 & 0 & 0\\
0 & 0 & 0 & \dive\\
0 & 0 & \interior\grad & 0
\end{array}\right),\label{eq:A}
\end{equation}
which clearly is skew-selfadjoint. Of course, other boundary conditions
can be imposed making $A$ skew-selfadjoint, see e.g.~\citep{Trostorff2014}. 

As we shall see, the Lord-Shulman model {[}2{]}, the two dual-phase
lag models {[}11, 12{]} and the three Green-Naghdy models {[}7-9{]}
are based on the same relations (\ref{eq:Hooke}), (\ref{eq:entropy}),
differences only appearing in the modification of Fourier's law (\ref{eq:Fourier}).
In the case of the Green-Lindsay model {[}6{]}, although of the same
formal shape, the meaning of the temperature $\theta$ is replaced
by the differential expression $\left(1+n_{0}\partial_{0}\right)$
applied to temperature. Therefore, in order to avoid confusion, we
shall use in this case 
\[
\Theta\coloneqq\theta+n_{0}\partial_{0}\theta
\]
instead of re-dedicating the symbol $\theta$, where $n_{0}$is the
thermal relaxation time, a characteristic of this model.

\section{Solution theory to some thermo-elastic models}

In this section we will show that the models of thermo-elasticity
mentioned in the introduction can be written as evolutionary problems
in the sense of Section 1.1 and thus, their well-posedness can be
shown with the help of Theorem \ref{thm:sol_th}. In fact, we will
show that a generalized model of the basic Green-Lindsay type allows
to recover all other models as special cases. We will begin to formulate
this abstract model and prove its well-posedness. In the subsequent
subsection, we will show how the classical models can be recovered
from the abstract one and which conditions yield their well-posedness.

\subsection{A general rational material law for thermo-elasticity}

We consider the following material law $M(\partial_{0}^{-1})=M_{0}+\partial_{0}^{-1}M_{1}(\partial_{0}^{-1}),$
where 
\begin{align}
M_{0} & =\left(\begin{array}{cccc}
\rho_{0} & 0 & 0 & 0\\
0 & C^{-1} & C^{-1}\Gamma & 0\\
0 & \quad\Gamma^{*}C^{-1} & \quad\nu+\Gamma^{*}C^{-1}\Gamma+\zeta_{0}^{*}a_{0}\zeta_{0} & \;\zeta_{0}^{*}a_{0}\\
0 & 0 & a_{0}\zeta_{0} & a_{0}
\end{array}\right),\label{eq:M_0-rational}\\
M_{1}(\partial_{0}^{-1}) & =\left(\begin{array}{cccc}
0 & 0 & 0 & 0\\
0 & 0 & 0 & 0\\
0 & 0 & a_{1}(\partial_{0}^{-1}) & 0\\
0 & 0 & 0 & a_{2}(\partial_{0}^{-1})
\end{array}\right).\label{eq:M_1-rational}
\end{align}
Here $a_{0}\in L(L^{2}(\Omega)^{3})$ is a selfadjoint operator, $\zeta_{0}\in L(L^{2}(\Omega),L^{2}(\Omega)^{3})$
and $a_{1}:B(0,r)\to L(L^{2}(\Omega))$ and $a_{2}:B(0,r)\to L(L^{2}(\Omega)^{3})$
are rational functions for some $r>0.$ We recall from the previous
section that $\rho_{0},\nu\in L^{\infty}(\Omega)$ denote the mass
density and the specific heat, respectively, which will be assumed
to be real and strictly positive, i.e. $\rho_{0}(x),\nu(x)\geq c$
for some $c>0$ and almost every $x\in\Omega.$ Moreover, the elasticity
tensor $C\in L(L_{\mathrm{sym}}^{2}(\Omega)^{3\times3})$ is assumed
to be selfadjoint and strictly positive definite. 
\begin{thm}
\label{thm:well_posedness_thermoelast}Let $M_{0}$ and $M_{1}(\partial_{0}^{-1})$
be as in \eqref{eq:M_0-rational} and \eqref{eq:M_1-rational}, respectively.
We assume that $\rho_{0},\nu\in L^{\infty}(\Omega)$ are real-valued
and strictly positive and $C\in L(L_{\mathrm{sym}}^{2}(\Omega)^{3\times3})$
is selfadjoint and strictly positive definite. Moreover, we assume
that $a_{0}$ is strictly positive definite on its range and $\Re a_{2}(0)$
is strictly positive definite on the kernel of $a_{0}$. Then the
evolutionary problem 
\begin{equation}
\left(\partial_{0}M_{0}+M_{1}(\partial_{0}^{-1})+A\right)\left(\begin{array}{c}
v\\
\sigma\\
\Theta\\
q
\end{array}\right)=\left(\begin{array}{c}
f\\
0\\
h\\
0
\end{array}\right)\label{eq:thermo_abstract}
\end{equation}
is well-posed in the sense of Theorem \ref{thm:sol_th}, where $A$
is given by%
\footnote{Or any other skew-selfadjoint restriction of (\ref{eq:A_without}).%
} (\ref{eq:A}) and $\Theta=\left(1+n_{0}\partial_{0}\right)\theta$.\end{thm}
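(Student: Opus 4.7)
The plan is to invoke Theorem~\ref{thm:sol_th} with the material law $M(z)\coloneqq M_0+zM_1(z)$, for which $\partial_0 M(\partial_0^{-1})=\partial_0 M_0+M_1(\partial_0^{-1})$ matches the left-hand side of \eqref{eq:thermo_abstract}. Since $A$ in \eqref{eq:A} is already skew-selfadjoint and $a_1,a_2$ are rational, $M$ is analytic and bounded on a ball $B\bigl(\tfrac{1}{2\rho_0},\tfrac{1}{2\rho_0}\bigr)$ as soon as $\rho_0$ is taken large enough for this ball to avoid every pole of $a_1,a_2$. By the simplification recorded just after \eqref{eq:rational0-1}, the positivity condition \eqref{eq:pos_def} then reduces to
\[
\rho M_0+\Re M_1(0)\;\geq\;c
\]
for some $c>0$ and all sufficiently large $\rho$; this is the whole content of the proof.

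As a first step I would check that $M_0$ is selfadjoint (all its blocks being explicitly selfadjoint pairs under the hypotheses on $\rho_0,\nu,C,a_0$) and, by completing the square, that
\[
\langle M_0 x,x\rangle=\langle\rho_0 v,v\rangle+\|C^{-1/2}(\sigma+\Gamma\Theta)\|^2+\langle\nu\Theta,\Theta\rangle+\langle a_0(q+\zeta_0\Theta),q+\zeta_0\Theta\rangle
\]
for $x=(v,\sigma,\Theta,q)$. Since $\rho_0,\nu\geq c>0$, $C$ (and hence $C^{-1}$) is strictly positive, and $a_0$ is strictly positive on $\ker(a_0)^\perp$, this identity yields
\[
\langle M_0 x,x\rangle\;\geq\;c_1\bigl(\|v\|^2+\|\sigma+\Gamma\Theta\|^2+\|\Theta\|^2+\|P^\perp(q+\zeta_0\Theta)\|^2\bigr),
\]
where $P$ is the orthogonal projection of $L^2(\Omega)^3$ onto $\ker a_0$. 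In particular $M_0$ provides \emph{no} control on the direction $Pq$, so positivity there must come from $\Re M_1(0)$.

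The second step is to exploit that $\Re a_2(0)$ is strictly positive on $\ker a_0$. Splitting $q=Pq+P^\perp q$ and using a Young inequality to absorb the off-diagonal cross term $\langle\Re a_2(0)Pq,P^\perp q\rangle$ produces
\[
\Re\langle a_2(0)q,q\rangle\;\geq\;\tfrac{c_2}{2}\|Pq\|^2-C_2\|P^\perp q\|^2.
\]
Analogous Young estimates give $\|P^\perp(q+\zeta_0\Theta)\|^2\geq\tfrac{1}{2}\|P^\perp q\|^2-\|\zeta_0\|^2\|\Theta\|^2$, and $\|\sigma+\Gamma\Theta\|^2+\|\Theta\|^2\geq c'(\|\sigma\|^2+\|\Theta\|^2)$ (because $(\sigma,\Theta)\mapsto(\sigma+\Gamma\Theta,\Theta)$ is a bounded bijection with bounded inverse). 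Inserting these bounds into $\rho\langle M_0 x,x\rangle+\Re\langle M_1(0)x,x\rangle$, and using $|\Re\langle a_1(0)\Theta,\Theta\rangle|\leq\|a_1(0)\|\,\|\Theta\|^2$, I obtain a lower bound in which the coefficients of $\|v\|^2,\|\sigma\|^2,\|\Theta\|^2,\|P^\perp q\|^2,\|Pq\|^2$ are of the form $\rho c-C$ or $c_2/2$; choosing $\rho$ large enough makes all of them strictly positive, and Theorem~\ref{thm:sol_th} applies.

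The main obstacle is precisely this last combinatorial step: $M_0$ is positive only \emph{modulo} $\ker a_0$, $\Re a_2(0)$ supplies positivity only \emph{on} $\ker a_0$ and may be indefinite on $\ker(a_0)^\perp$, while the coupling tensor $\zeta_0$ intertwines the $\Theta$- and $q$-components inside $M_0$. The key observation making the algebra fit together is that the terms carrying the large prefactor $\rho$ from $M_0$ eventually dominate both the bounded off-diagonal errors produced by $\Re a_2(0)$ and the $\zeta_0$-coupling, so that a sufficiently large $\rho$ delivers the required uniform positivity.
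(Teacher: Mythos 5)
Your proof is correct and follows essentially the same route as the paper: your completed-square identity for $\langle M_0x,x\rangle$ is exactly the paper's congruence factorization $\rho M_0+\Re M_1(0)=T^*(\rho D+E)T$ with $T:(v,\sigma,\Theta,q)\mapsto(v,\sigma+\Gamma\Theta,\Theta,q+\zeta_0\Theta)$ written in quadratic-form language, and the subsequent splitting along $\ker a_0$, the Young (``Euklid'') inequality for the cross terms, and the large-$\rho$ absorption coincide with the paper's final step. One small caution: perform the Young estimates in the transformed variables $\sigma+\Gamma\Theta$ and $q+\zeta_0\Theta$ (as your own ``bounded bijection'' remark licenses) rather than unpacking $\|P^\perp(q+\zeta_0\Theta)\|^2\geq\tfrac12\|P^\perp q\|^2-\|\zeta_0\|^2\|\Theta\|^2$ inside the $\rho$-scaled term, since the latter leaves a coefficient $\rho c_1(1-\|\zeta_0\|^2)$ on $\|\Theta\|^2$ that is not positive when $\|\zeta_0\|>1$.
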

\begin{proof}
According to Theorem \ref{thm:sol_th}, we need to verify condition
\eqref{eq:pos_def} for $M(\partial_{0}^{-1})=M_{0}+\partial_{0}^{-1}M_{1}(\partial_{0}^{-1})$.
Or, equivalently, by the structural properties assumed for $M_{1}$,
we need to verify that there exists $\rho_{1}>0$ such that for all
$\rho>\rho_{1}$ we have that
\[
\rho M_{0}+\Re M_{1}(0)\geq c
\]
 for some $c>0$. Indeed, the latter equation is precisely the reformulation
of \eqref{eq:pos-def2-1} for the particular $M$ under consideration.
For $\rho>0$, we compute
\begin{align*}
 & \rho M_{0}+\Re M_{1}(0)\\
 & =\rho\left(\begin{array}{cccc}
\rho_{0} & 0 & 0 & 0\\
0 & C^{-1} & C^{-1}\Gamma & 0\\
0 & \Gamma^{*}C^{-1} & \nu+\Gamma^{*}C^{-1}\Gamma+\zeta_{0}^{*}a_{0}\zeta & \zeta_{0}^{*}a_{0}\\
0 & 0 & a_{0}\zeta_{0} & a_{0}
\end{array}\right)+\left(\begin{array}{cccc}
0 & 0 & 0 & 0\\
0 & 0 & 0 & 0\\
0 & 0 & \Re a_{1}(0) & 0\\
0 & 0 & 0 & \Re a_{2}(0)
\end{array}\right)\\
 & =\left(\begin{array}{cccc}
1 & 0 & 0 & 0\\
0 & 1 & 0 & 0\\
0 & \Gamma^{*} & 1 & \zeta_{0}^{*}\\
0 & 0 & 0 & 1
\end{array}\right)\times\\
 & \qquad\left(\rho\left(\begin{array}{cccc}
\rho_{0} & 0 & 0 & 0\\
0 & C^{-1} & 0 & 0\\
0 & 0 & \nu & 0\\
0 & 0 & 0 & a_{0}
\end{array}\right)+\left(\begin{array}{cccc}
0 & 0 & 0 & 0\\
0 & 0 & 0 & 0\\
0 & 0 & \Re a_{1}(0)+\zeta_{0}^{*}\Re a_{2}(0)\zeta_{0} & -\zeta_{0}^{*}\Re a_{2}(0)\\
0 & 0 & -\Re a_{2}(0)\zeta_{0} & \Re a_{2}(0)
\end{array}\right)\right)\left(\begin{array}{cccc}
1 & 0 & 0 & 0\\
0 & 1 & \Gamma & 0\\
0 & 0 & 1 & 0\\
0 & 0 & \zeta_{0} & 1
\end{array}\right).
\end{align*}
We read off that the latter is strictly positive definite if and only
if the operator
\[
\rho\left(\begin{array}{cccc}
\rho_{0} & 0 & 0 & 0\\
0 & C^{-1} & 0 & 0\\
0 & 0 & \nu & 0\\
0 & 0 & 0 & a_{0}
\end{array}\right)+\left(\begin{array}{cccc}
0 & 0 & 0 & 0\\
0 & 0 & 0 & 0\\
0 & 0 & \Re a_{1}(0)+\zeta_{0}^{*}\Re a_{2}(0)\zeta_{0} & -\zeta_{0}^{*}\Re a_{2}(0)\\
0 & 0 & -\Re a_{2}(0)\zeta_{0} & \Re a_{2}(0)
\end{array}\right)
\]
is strictly positive definite. As, by assumption, the operators $\rho_{0}$
and $C^{-1}$ are positive definite anyway, we only have to study
the positive definiteness of the operator
\begin{equation}
\rho\left(\begin{array}{cc}
\nu & 0\\
0 & a_{0}
\end{array}\right)+\left(\begin{array}{cc}
\Re a_{1}(0)+\zeta_{0}^{*}\Re a_{2}(0)\zeta_{0} & -\zeta_{0}^{*}\Re a_{2}(0)\\
-\Re a_{2}(0)\zeta_{0} & \Re a_{2}(0)
\end{array}\right).\label{eq:Rest of the operator}
\end{equation}
Now, decomposing the underlying (spatial) Hilbert space as 
\[
L^{2}(\Omega)\oplus L^{2}(\Omega)^{3}=R\left(\left(\begin{array}{cc}
\nu & 0\\
0 & a_{0}
\end{array}\right)\right)\oplus N\left(\left(\begin{array}{cc}
\nu & 0\\
0 & a_{0}
\end{array}\right)\right),
\]
 which can be done, since both $a_{0}$ and $\nu$ are strictly positive
definite on the respective ranges, we realize that \eqref{eq:Rest of the operator}
is strictly positive definite on $H_{\rho,0}\left(\mathbb{R},R\left(\left(\begin{array}{cc}
\nu & 0\\
0 & a_{0}
\end{array}\right)\right)\right)$ with positive definiteness constant arbitrarily large, depending
on the choice of $\rho>0.$ By Euklid's inequality ($2ab\leq\frac{1}{\epsilon}a^{2}+\epsilon b^{2},$
$a,b\in\mathbb{R}$, $\epsilon>0$), the assertion follows, if we
show that the operator in \eqref{eq:Rest of the operator} is strictly
positive definite on the nullspace of $a_{0}$. However, by assumption,
$\Re a_{2}(0)$ is strictly positive on $N(a_{0})$. This yields the
assertion.\end{proof}
\begin{rem}
We write down Equation (\ref{eq:thermo_abstract}) line by line. It
is 
\begin{align*}
\partial_{0}\rho_{0}v-\Dive\sigma & =f,\\
\partial_{0}C^{-1}\sigma+\partial_{0}C^{-1}\Gamma\Theta-\interior\Grad v & =0,\\
\partial_{0}\Gamma^{\ast}C^{-1}\sigma+\partial_{0}\left(\nu+\Gamma^{\ast}C^{-1}\Gamma+\zeta_{0}^{\ast}a_{0}\zeta_{0}\right)\Theta+\partial_{0}\zeta_{0}^{\ast}a_{0}q+a_{1}(\partial_{0}^{-1})\Theta+\dive q & =h,\\
\partial_{0}a_{0}\zeta_{0}\Theta+\partial_{0}a_{0}q+a_{2}(\partial_{0}^{-1})q+\interior\grad\Theta & =0.
\end{align*}
Defining $u\coloneqq\partial_{0}^{-1}v,\varepsilon\coloneqq\Grad u$
and $\eta\coloneqq\rho_{0}^{-1}\left(\Gamma^{\ast}\varepsilon+\left(\nu+\zeta_{0}^{\ast}a_{0}\zeta_{0}\right)\Theta+\zeta_{0}^{\ast}a_{0}q+\partial_{0}^{-1}a_{1}(\partial_{0}^{-1})\Theta\right)$
we get from the second line 
\[
\sigma=C\varepsilon-\Gamma\Theta.
\]
Moreover, the fourth line reads as 
\[
\partial_{0}a_{0}q+a_{2}(\partial_{0}^{-1})q=-\partial_{0}a_{0}\zeta_{0}\Theta-\interior\grad\Theta
\]
and the first line is 
\[
\partial_{0}^{2}\rho_{0}u-\Dive\sigma=f.
\]
Finally, the third line reads as 
\begin{align*}
\partial_{0}\rho_{0}\eta+\dive q & =\partial_{0}\left(\Gamma^{\ast}\varepsilon+\left(\nu+\zeta_{0}^{\ast}a_{0}\zeta_{0}\right)\Theta+\zeta_{0}^{\ast}a_{0}q+\partial_{0}^{-1}a_{1}(\partial_{0}^{-1})\Theta\right)+a_{1}(\partial_{0}^{-1})\Theta+\dive q\\
 & =\partial_{0}\Gamma^{\ast}C^{-1}\sigma+\partial_{0}\left(\Gamma^{\ast}C^{-1}\Gamma\Theta+\nu+\zeta_{0}^{\ast}a_{0}\zeta_{0}\right)\Theta+\partial_{0}\zeta_{0}^{\ast}a_{0}q+a_{1}(\partial_{0}^{-1})\Theta+\dive q\\
 & =h,
\end{align*}
where we have used $\sigma=C\varepsilon-\Gamma\Theta$. Summarizing,
our material relations are 
\begin{align}
\partial_{0}^{2}\rho_{0}u-\Dive\sigma & =f,\label{eq:elast_new}\\
\partial_{0}\rho_{0}\eta+\dive q & =h,\label{eq:heat_new}\\
\sigma & =C\varepsilon-\Gamma\Theta,\label{eq:Hooke_new}\\
\rho_{0}\eta & =\Gamma^{\ast}\varepsilon+\left(\nu+\zeta_{0}^{\ast}a_{0}\zeta_{0}\right)\Theta+\zeta_{0}^{\ast}a_{0}q+\partial_{0}^{-1}a_{1}(\partial_{0}^{-1})\Theta,\label{eq:entropy_new}\\
\partial_{0}a_{0}q+a_{2}(\partial_{0}^{-1})q & =-\partial_{0}a_{0}\zeta_{0}\Theta-\interior\grad\Theta,\label{eq:Fourier_new}\\
\Theta & =(1+n_{0}\partial_{0})\theta.\label{eq:Theta_new}
\end{align}

We note that for $n_{0}=a_{0}=\zeta_{0}=a_{1}(\partial_{0}^{-1})=0$
and $a_{2}(\partial_{0}^{-1})=\kappa^{-1}$ we recover the equations
of irreversible thermo-elasticity (compare Subsection 1.2).
\end{rem}
We will now discuss several models of thermo-elasticity and we will
show that they all are covered by the model proposed above. Due to
the importance of $M\left(0\right)=M_{0}$, $M^{\prime}\left(0\right)=M_{1}(0)$
in the discussion of well-posedness, compare (\ref{eq:pos-def2-1}),
we are first lead to distinguish two classes of models. 
\begin{itemize}
\item Generic models.

These models are characterized by $M\left(0\right)=\mathrm{\Re}M\left(0\right)$
being strictly positive definite. For these (\ref{eq:pos-def2-1})
is always satisfied. Moreover, $M\left(0\right)+\epsilon$ is then
also strictly positive definite for any sufficiently small selfadjoint
operator $\epsilon$. 

\item Degenerate models.

These models fail to have the remarkable stability with regards to
perturbations of the generic models. They are characterized by $M\left(0\right)=\mathrm{\Re}M\left(0\right)$
having a non-trivial null space. In these cases (\ref{eq:pos-def2-1})
can be ensured for example by assuming that $M\left(0\right)=\Re M\left(0\right)$
is strictly positive definite on its own range $M\left(0\right)\left[H\right]$,
i.e.
\[
\left\langle x|M\left(0\right)x\right\rangle _{H}\geq c_{0}>0\mbox{ for all }x\in M\left(0\right)\left[H\right],
\]
and $\Re M^{\prime}\left(0\right)$ being strictly positive definite
on the null space $\left[\left\{ 0\right\} \right]M\left(0\right)$,
i.e.
\[
\left\langle x|\Re M^{\prime}\left(0\right)x\right\rangle _{H}\geq c_{0}>0\mbox{ for all }x\in\left[\left\{ 0\right\} \right]M\left(0\right).
\]

\end{itemize}

\subsection{The generic case}

\subsubsection{Lord-Shulman model}

In contrast to the model for irreversible thermo-elasticity (compare
Subsection 1.2), Lord and Shulman (\citep{Lord1967}) proposed to
replace Fourier's law (\ref{eq:Fourier}) by the so-called Cattaneo
modification of Fourier's law (see \citep{cattaneo1958}), which reads
as
\begin{align*}
\partial_{0}a_{0}q+q & =-\kappa\interior\grad\theta,
\end{align*}
where $a_{0}\in L^{\infty}(\Omega)$ is assumed to be real-valued
and strictly positive definite. This results in a system of the form
(\ref{eq:thermo_abstract}), where 
\begin{equation}
n_{0}=\zeta_{0}=a_{1}(\partial_{0}^{-1})=0\text{ and }a_{2}(\partial_{0}^{-1})=\kappa^{-1}.\label{eq:zeros_LSm}
\end{equation}
In consequence, we obtain the well-posedness for this model by Theorem
\ref{thm:well_posedness_thermoelast}:
\begin{cor}
\label{cor:Lord-Shulman}Let $M_{0}$, $M_{1}(\partial_{0}^{-1})$
and $A$ be given by \eqref{eq:M_0-rational}, \eqref{eq:M_1-rational},
and a skew-selfadjoint restriction of \eqref{eq:A_without}, respectively.
Assume that $a_{0},\rho_{0},\nu\in L^{\infty}(\Omega),$ $\kappa\in L(L^{2}(\Omega)^{3})$,
$C\in L(L_{\textnormal{sym}}^{2}(\Omega)^{3\times3})$ are selfadjoint
and strictly positive definite%
\footnote{For ease of formulation, note that we identified $a_{0}$, $\rho_{0}$
and $\nu$ with the induced multiplication operators on $L^{2}$.
In this way, selfadjointness is just the same as to say the respective
$L^{\infty}$-functions assume only real values and, thus, strict
positivity coincides with strict positivity of the respective functions.%
} as well as \eqref{eq:zeros_LSm}. Then \eqref{eq:thermo_abstract}
is well-posed in the sense of Theorem \ref{thm:sol_th}.\end{cor}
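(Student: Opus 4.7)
The plan is to apply Theorem \ref{thm:well_posedness_thermoelast} directly, since the Lord--Shulman model corresponds exactly to the abstract material law \eqref{eq:M_0-rational}--\eqref{eq:M_1-rational} under the specializations \eqref{eq:zeros_LSm}. Thus, I only need to check that the hypotheses of Theorem \ref{thm:well_posedness_thermoelast} are implied by those of the corollary.

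First I would note that the assumptions on $\rho_0$, $\nu$, and $C$ in the corollary coincide verbatim with those required by Theorem \ref{thm:well_posedness_thermoelast}, and the skew-selfadjointness of $A$ is likewise supplied by hypothesis (any skew-selfadjoint restriction of \eqref{eq:A_without}). It therefore remains to verify the two positive definiteness conditions concerning $a_0$ and $\Re a_2(0)$ relative to the range/kernel decomposition appearing in the proof of Theorem \ref{thm:well_posedness_thermoelast}.

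For the first condition, $a_0 \in L^\infty(\Omega)$, viewed as a multiplication operator on $L^2(\Omega)^3$, is assumed selfadjoint and strictly positive definite; in particular there is $c > 0$ with $\langle x, a_0 x\rangle \geq c\|x\|^2$ for every $x$, so $a_0$ is strictly positive definite on its range (which is in fact dense), and simultaneously $N(a_0) = \{0\}$. The second condition, requiring $\Re a_2(0)$ to be strictly positive definite on $N(a_0)$, is therefore vacuously satisfied; moreover, the specialization $a_2(\partial_0^{-1}) = \kappa^{-1}$ yields $\Re a_2(0) = \kappa^{-1}$, which is strictly positive definite on the entire space since $\kappa$ is.

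With all hypotheses of Theorem \ref{thm:well_posedness_thermoelast} verified, the conclusion follows immediately. Frankly, there is no substantive obstacle here; the corollary is a matter of matching coefficients and recognising that the positive definiteness of $a_0$ on all of $L^2(\Omega)^3$ makes the kernel-side hypothesis trivial, so no additional argument is needed beyond the invocation of the general theorem.
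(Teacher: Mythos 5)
Your proposal is correct and follows exactly the paper's own argument: the paper's entire proof is the observation that the kernel of $a_{0}$ is trivial, so the hypothesis on $\Re a_{2}(0)$ in Theorem \ref{thm:well_posedness_thermoelast} is vacuous and the general theorem applies directly. Your additional remarks (matching the remaining coefficients and noting $\Re a_{2}(0)=\kappa^{-1}$ is positive definite anyway) are consistent with, and merely elaborate on, what the paper leaves implicit.
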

\begin{proof}
It suffices to observe that the kernel of $a_{0}$ is trivial.\end{proof}
\begin{rem}
If we mark possible non-zero entries in the operator matrix $M\left(0\right)$
by a star, we have the zero-pattern
\[
M\left(0\right)=\left(\begin{array}{cccc}
\bigstar & 0 & 0 & 0\\
0 & \bigstar & \bigstar & 0\\
0 & \bigstar & \bigstar & 0\\
0 & 0 & 0 & \bigstar
\end{array}\right).
\]
The zero-pattern of $M_{1}$ is
\[
M_{1}(\partial_{0}^{-1})=M_{1}(0)=\left(\begin{array}{cccc}
0 & 0 & 0 & 0\\
0 & 0 & 0 & 0\\
0 & 0 & 0 & 0\\
0 & 0 & 0 & \bigstar
\end{array}\right).
\]
Hence, we observe that there are no higher order terms in the material
law operator in case of Lord-Shulman model. 
\end{rem}

\subsubsection{Green-Naghdy model of type II}

In the models proposed by Green and Naghdy (see \citep{Green_Naghdi1991,Green_Naghdi1992,Green_Naghdi1993})
a modified heat flux of the form 
\begin{align*}
q & =-\partial_{0}^{-1}\left(k^{\ast}+k\partial_{0}\right)\interior\grad\theta\\
 & =-(\partial_{0}^{-1}k^{\ast}+k)\interior\grad\theta.
\end{align*}
is assumed by considering \emph{$k,k^{\ast}\in\mathbb{R}$} as thermal
conductivity and conductivity rate, respectively. Depending on $k^{\ast}$
and $k$, we distinguish between three types of this model. If $k=0$
and $k^{\ast}\ne0$, we speak about the Green-Naghdy Model of Type
II. In this case, the above heat flux satisfies 
\[
\partial_{0}\left(k^{\ast}\right)^{-1}q=-\interior\grad\theta.
\]
This system is covered by the abstract one if 
\begin{equation}
n_{0}=\zeta_{0}=a_{1}(\partial_{0}^{-1})=a_{2}(\partial_{0}^{-1})=0\text{ and }a_{0}=\left(k^{\ast}\right)^{-1}.\label{eq:zeros_GNII}
\end{equation}
Hence, $M(0)$ has the zero-pattern
\[
M(0)=\left(\begin{array}{cccc}
\bigstar & 0 & 0 & 0\\
0 & \bigstar & \bigstar & 0\\
0 & \bigstar & \bigstar & 0\\
0 & 0 & 0 & \bigstar
\end{array}\right),
\]
while 
\[
M_{1}(\partial_{0}^{-1})=\left(\begin{array}{cccc}
0 & 0 & 0 & 0\\
0 & 0 & 0 & 0\\
0 & 0 & 0 & 0\\
0 & 0 & 0 & 0
\end{array}\right).
\]
The corresponding well-posedness result in a generalized fashion reads
as:
\begin{cor}
Let $M_{0}$, $M_{1}(\partial_{0}^{-1})$ and $A$ be given by \eqref{eq:M_0-rational},
\eqref{eq:M_1-rational}, and a skew-selfadjoint restriction of \eqref{eq:A_without},
respectively. Assume that $k^{*},\rho_{0},\nu\in L^{\infty}(\Omega),$
$C\in L(L_{\textnormal{sym}}^{2}(\Omega)^{3\times3})$ are selfadjoint
and strictly positive definite as well as \eqref{eq:zeros_GNII}.
Then \eqref{eq:thermo_abstract} is well-posed in the sense of Theorem
\ref{thm:sol_th}.\end{cor}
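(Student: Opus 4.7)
The plan is to reduce the corollary to a direct application of Theorem \ref{thm:well_posedness_thermoelast} under the parameter assignment \eqref{eq:zeros_GNII}, which specializes the abstract system \eqref{eq:thermo_abstract} to exactly the Green--Naghdy type II model. Strict positive definiteness and selfadjointness of $\rho_{0}$, $\nu$ and $C$ are assumed by hypothesis, so the only two conditions of Theorem \ref{thm:well_posedness_thermoelast} still to verify are (i) that $a_{0}=(k^{\ast})^{-1}$ is strictly positive definite on its range, and (ii) that $\Re a_{2}(0)$ is strictly positive definite on $\ker(a_{0})$.

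Both verifications are short. I would identify $k^{\ast}\in L^{\infty}(\Omega)$, as in the convention already used for Corollary \ref{cor:Lord-Shulman}, with the (componentwise) multiplication operator on $L^{2}(\Omega)^{3}$. Since $k^{\ast}$ is selfadjoint and bounded below by some $c>0$ almost everywhere, it is boundedly invertible on $L^{2}(\Omega)^{3}$, so $a_{0}=(k^{\ast})^{-1}$ is well defined, selfadjoint, and bounded below by $\|k^{\ast}\|_{\infty}^{-1}$ on the whole of $L^{2}(\Omega)^{3}$. This yields (i) immediately, with the range of $a_{0}$ being all of $L^{2}(\Omega)^{3}$, and it also gives $\ker(a_{0})=\{0\}$, which makes (ii) vacuous; in particular the fact that $a_{2}\equiv 0$, so $\Re a_{2}(0)=0$, causes no obstruction.

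With both conditions verified, Theorem \ref{thm:well_posedness_thermoelast} applies and delivers the well-posedness of \eqref{eq:thermo_abstract} in the sense of Theorem \ref{thm:sol_th}. There is no genuine obstacle in the argument; the key (and only) conceptual point is the observation that strict positivity of $k^{\ast}$ forces the kernel of $a_{0}$ to be trivial, so that the delicate second positivity hypothesis of Theorem \ref{thm:well_posedness_thermoelast} becomes empty in this specialization and the degeneracy of the material law on the $q$-component is absorbed by the positivity contribution of $a_{0}$ through the $\partial_{0}M_{0}$-term.
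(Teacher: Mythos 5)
Your proposal is correct and follows exactly the paper's own argument: the paper's proof is the one-line observation that $N(a_{0})=\{0\}$ followed by an application of Theorem \ref{thm:well_posedness_thermoelast}, which is precisely the reduction you carry out (in somewhat more detail). Nothing further is needed.
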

\begin{proof}
Again, the assertion follows when applying Theorem \ref{thm:well_posedness_thermoelast}
while observing that $N(a_{0})=\{0\}$.
\end{proof}

\subsubsection{The generic Green-Lindsay model}

As indicated earlier in Section 1.2, here the material relations (\ref{eq:Hooke})-(\ref{eq:Fourier})
are modified to 
\begin{align*}
\sigma & =C\epsilon-\Gamma\left(\theta+n_{0}\partial_{0}\theta\right),\\
\rho_{0}\eta & =d\theta+h\partial_{0}\theta+\Gamma^{*}\epsilon-b^{\ast}\interior\grad\theta,\\
q & =-b\partial_{0}\theta-\kappa\interior\grad\theta.
\end{align*}
Here \emph{$b,d$} are material parameters and $h,n_{0}(\ne0)$ are
the thermal relaxation times. Now, letting 
\[
\Theta\coloneqq\theta+n_{0}\partial_{0}\theta
\]
we get
\begin{align*}
\theta & =\left(1+n_{0}\partial_{0}\right)^{-1}\Theta\\
 & =\partial_{0}^{-1}\left(\partial_{0}^{-1}+n_{0}\right)^{-1}\Theta
\end{align*}
and the material relations above turn into
\begin{align*}
\epsilon & =C^{-1}\sigma+C^{-1}\Gamma\Theta,\\
q & =-\left(\partial_{0}^{-1}+n_{0}\right)^{-1}\left(b\Theta+\kappa\partial_{0}^{-1}\interior\grad\Theta\right),
\end{align*}
which yields, 
\[
\partial_{0}n_{0}\kappa^{-1}q+\kappa^{-1}q=-\partial_{0}\kappa^{-1}b\Theta-\interior\grad\Theta.
\]
Moreover, we have, using the Neumann series, 
\begin{align*}
\rho_{0}\eta & =d\theta+h\partial_{0}\theta+\Gamma^{*}\varepsilon-b^{\ast}\interior\grad\theta\\
 & =\left(d+h\partial_{0}\right)\partial_{0}^{-1}\left(\partial_{0}^{-1}+n_{0}\right)^{-1}\Theta+\Gamma^{\ast}\varepsilon+b^{\ast}\kappa^{-1}b\left(\partial_{0}^{-1}+n_{0}\right)^{-1}\Theta+b^{\ast}\kappa^{-1}q\\
 & =\Gamma^{\ast}\varepsilon+\left(\partial_{0}^{-1}d+h\right)n_{0}^{-1}\sum_{j=0}^{\infty}\left(-\partial_{0}^{-1}n_{0}^{-1}\right)^{j}\Theta+b^{\ast}\kappa^{-1}bn_{0}^{-1}\sum_{j=0}^{\infty}(-\partial_{0}^{-1}n_{0}^{-1})^{j}\Theta+b^{\ast}\kappa^{-1}q\\
 & =\Gamma^{\ast}\varepsilon+\left(hn_{0}^{-1}+b^{\ast}\kappa^{-1}bn_{0}^{-1}\right)\Theta+b^{\ast}\kappa^{-1}q\\
 & \qquad+\partial_{0}^{-1}\left(dn_{0}^{-1}-\left(h+b^{\ast}\kappa^{-1}b\right)n_{0}^{-2}\right)\sum_{j=0}^{\infty}\left(-\partial_{0}^{-1}n_{0}^{-1}\right)^{j}\Theta\\
 & =\Gamma^{\ast}\varepsilon+\left(hn_{0}^{-1}+b^{\ast}\kappa^{-1}bn_{0}^{-1}\right)\Theta+b^{\ast}\kappa^{-1}q+\partial_{0}^{-1}\left(d-\left(h+b^{\ast}\kappa^{-1}b\right)n_{0}^{-1}\right)\left(n_{0}+\partial_{0}^{-1}\right)^{-1}\Theta.
\end{align*}
Thus, we are in our abstract situation with 
\begin{align}
a_{0} & =n_{0}\kappa^{-1},a_{2}(\partial_{0}^{-1})=\kappa^{-1},\zeta_{0}=bn_{0}^{-1},\nu=hn_{0}^{-1}\text{ and }\label{eq:zeros_GLm-gen}\\
a_{1}(\partial_{0}^{-1}) & =\left(d-\left(h+b^{\ast}\kappa^{-1}b\right)n_{0}^{-1}\right)\left(n_{0}+\partial_{0}^{-1}\right)^{-1}\nonumber 
\end{align}
In this case the operator matrix $M\left(0\right)$ has the zero-pattern
\[
M\left(0\right)=\left(\begin{array}{cccc}
\bigstar & 0 & 0 & 0\\
0 & \bigstar & \bigstar & 0\\
0 & \bigstar & \bigstar & \bigstar\\
0 & 0 & \bigstar & \bigstar
\end{array}\right).
\]
The zero-pattern of $M_{1}(0)$ is now 
\[
M_{1}(0)=\left(\begin{array}{cccc}
0 & 0 & 0 & 0\\
0 & 0 & 0 & 0\\
0 & 0 & \bigstar & 0\\
0 & 0 & 0 & \bigstar
\end{array}\right).
\]
Also, we note that there are higher order terms in the material law
operator. The corresponding well-posedness result is therefore noted
in the following corollary.
\begin{cor}
Let $M_{0}$, $M_{1}(\partial_{0}^{-1})$ and $A$ be given by \eqref{eq:M_0-rational},
\eqref{eq:M_1-rational}, and a skew-selfadjoint restriction of \eqref{eq:A_without},
respectively. Let $n_{0}>0$ and assume that $h,\rho_{0}\in L^{\infty}(\Omega),$
$\kappa\in L(L^{2}(\Omega)^{3})$, $C\in L(L_{\textnormal{sym}}^{2}(\Omega)^{3\times3})$
are selfadjoint and strictly positive definite, $b\in L(L^{2}(\Omega),L^{2}(\Omega)^{3})$
as well as \eqref{eq:zeros_GLm-gen}. Then \eqref{eq:thermo_abstract}
is well-posed in the sense of Theorem \ref{thm:sol_th}.\end{cor}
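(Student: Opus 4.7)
The plan is to apply Theorem \ref{thm:well_posedness_thermoelast} with the specific choices of $\nu$, $a_0$, $\zeta_0$, $a_1$, $a_2$ dictated by \eqref{eq:zeros_GLm-gen}, so the proof reduces to checking that those choices satisfy the hypotheses of that theorem.

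First I would verify the structural requirement that $a_1$ and $a_2$ are operator-valued rational functions, analytic and bounded on some disc $B(0,r)$. For $a_2\equiv\kappa^{-1}$ this is immediate. For $a_1(z)=\bigl(d-(h+b^{*}\kappa^{-1}b)n_{0}^{-1}\bigr)(n_{0}+z)^{-1}$ the denominator is invertible whenever $|z|<n_{0}$, so any $r\in(0,n_{0})$ works; the essential point here is the assumption $n_{0}>0$, which is precisely what makes this rational formula produce an admissible material law. This is the only place where the strict positivity of $n_{0}$ is used, and it is really the only mildly delicate step.

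Second I would check the positivity hypotheses. Since $\kappa$ is selfadjoint and strictly positive definite, so is $\kappa^{-1}$, and hence $a_{0}=n_{0}\kappa^{-1}$ is selfadjoint and strictly positive definite on the whole of $L^{2}(\Omega)^{3}$. In particular, $a_{0}$ is strictly positive definite on its range (which is the whole space), and its kernel is $\{0\}$, so the requirement on $\Re a_{2}(0)$ over $N(a_{0})$ holds vacuously. The coefficient $\nu=hn_{0}^{-1}$ is real-valued and strictly positive because $h$ is and $n_{0}>0$. The hypotheses on $\rho_{0}$ and $C$ are contained in the assumptions of the corollary.

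All hypotheses of Theorem \ref{thm:well_posedness_thermoelast} being met, the well-posedness of \eqref{eq:thermo_abstract} follows at once. In effect the main "obstacle" is really just the observation that, despite the more intricate rational dependence in $a_{1}$ (reflecting the higher-order time derivatives of temperature), the model still falls into the generic regime with trivial $N(a_{0})$, exactly as in the Lord--Shulman case treated in Corollary \ref{cor:Lord-Shulman}.
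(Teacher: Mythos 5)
Your proof is correct and takes essentially the same approach as the paper, whose entire argument is to note that $N(a_{0})=\{0\}$ (so the condition on $\Re a_{2}(0)$ is vacuous) and then apply Theorem \ref{thm:well_posedness_thermoelast}. You simply make explicit the routine verifications the paper leaves implicit, namely the analyticity of $a_{1}$ on $B(0,r)$ for $r<n_{0}$ and the strict positivity of $a_{0}=n_{0}\kappa^{-1}$ and $\nu=hn_{0}^{-1}$.
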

\begin{proof}
Again, note that the kernel of $a_{0}$ is trivial. Apply Theorem
\ref{thm:well_posedness_thermoelast}.
\end{proof}

\subsubsection{Dual phase-lag model of type II }

In case of the model DPL-II, apart from (\ref{eq:Hooke}), (\ref{eq:entropy})
we have here the modified Fourier law as 
\begin{align*}
\left(1+n_{1}\partial_{0}+\frac{1}{2}n_{1}^{2}\partial_{0}^{2}\right)q & =-\kappa\left(1+n_{2}\partial_{0}\right)\interior\grad\theta.
\end{align*}
where $n_{1},n_{2}\in\mathbb{R}\setminus\{0\}$ are called phase-lags.
Assuming that $\kappa$ is invertible, we can write the latter relation
as

\begin{align*}
-\interior\grad\theta & =\left(1+n_{2}\partial_{0}\right)^{-1}\left(1+n_{1}\partial_{0}+\frac{1}{2}n_{1}^{2}\partial_{0}^{2}\right)\kappa^{-1}q\\
 & =\left(\partial_{0}^{-1}+n_{1}+\frac{1}{2}n_{1}^{2}\partial_{0}\right)\left(\partial_{0}^{-1}+n_{2}\right)^{-1}\kappa^{-1}q\\
 & =\frac{1}{2}n_{1}^{2}\partial_{0}\left(\partial_{0}^{-1}+n_{2}\right)^{-1}\kappa^{-1}q+\left(n_{1}+\partial_{0}^{-1}\right)\left(\partial_{0}^{-1}+n_{2}\right)^{-1}\kappa^{-1}q\\
 & =\frac{1}{2}n_{1}^{2}n_{2}^{-1}\partial_{0}\sum_{j=0}^{\infty}\left(-\partial_{0}^{-1}n_{2}^{-1}\right)^{j}\kappa^{-1}q+\left(n_{1}+\partial_{0}^{-1}\right)\left(\partial_{0}^{-1}+n_{2}\right)^{-1}\kappa^{-1}q\\
 & =\frac{1}{2}n_{1}^{2}n_{2}^{-1}\partial_{0}\kappa^{-1}q-\frac{1}{2}n_{1}^{2}n_{2}^{-2}\sum_{j=0}^{\infty}\left(-\partial_{0}^{-1}n_{2}^{-1}\right)^{j}\kappa^{-1}q+\left(n_{1}+\partial_{0}^{-1}\right)\left(\partial_{0}^{-1}+n_{2}\right)^{-1}\kappa^{-1}q\\
 & =\frac{1}{2}n_{1}^{2}n_{2}^{-1}\partial_{0}\kappa^{-1}q+\left(\left(n_{1}+\partial_{0}^{-1}\right)-\frac{1}{2}n_{1}^{2}n_{2}^{-1}\right)\left(\partial_{0}^{-1}+n_{2}\right)^{-1}\kappa^{-1}q.
\end{align*}
Thus, this corresponds to the abstract situation when 
\begin{align}
n_{0} & =\zeta_{0}=a_{1}(\partial_{0}^{-1})=0\text{ and }\label{eq:zeros_Dual_phase-lagII}\\
a_{0} & =\frac{1}{2}n_{1}^{2}n_{2}^{-1}\kappa^{-1},\, a_{2}(\partial_{0}^{-1})=\left(\left(n_{1}+\partial_{0}^{-1}\right)-\frac{1}{2}n_{1}^{2}n_{2}^{-1}\right)\left(\partial_{0}^{-1}+n_{2}\right)^{-1}\kappa^{-1}.\nonumber 
\end{align}

Therefore, the zero-pattern of $M(0)$ is
\[
M\left(0\right)=\left(\begin{array}{cccc}
\bigstar & 0 & 0 & 0\\
0 & \bigstar & \bigstar & 0\\
0 & \bigstar & \bigstar & 0\\
0 & 0 & 0 & \bigstar
\end{array}\right).
\]
and the zero-pattern of $M_{1}(0)$ is
\[
M_{1}=\Re M_{1}=\left(\begin{array}{cccc}
0 & 0 & 0 & 0\\
0 & 0 & 0 & 0\\
0 & 0 & 0 & 0\\
0 & 0 & 0 & \bigstar
\end{array}\right).
\]
It is seen that this is similar to the case of the Lord-Shulman model.
Thus, the well-posedness conditions are similar to the ones in Corollary
\ref{cor:Lord-Shulman} using \eqref{eq:zeros_Dual_phase-lagII} instead
of \eqref{eq:zeros_LSm}. However, there are (different) higher order
terms in the material law.

\subsection{The P-degenerate case}

\subsubsection{Green-Naghdy model of type- I and type- III}

Recall that in the Green-Naghdy model (see Subsection 2.2.2), Fourier's
law is replaced by 
\[
q=-(\partial_{0}^{-1}k^{\ast}+k)\interior\grad\theta.
\]
In the Green-Naghdy model of type I, it is assumed that $k^{*}=0$,
$k>0$. Thus, the above relation becomes $q=-k\interior\grad\theta,$
which is the classical Fourier law and so we have
\[
M(0)=\left(\begin{array}{cccc}
\bigstar & 0 & 0 & 0\\
0 & \bigstar & \bigstar & 0\\
0 & \bigstar & \bigstar & 0\\
0 & 0 & 0 & 0
\end{array}\right),
\]

and
\[
M_{1}(\partial_{0}^{-1})=M_{1}(0)=\left(\begin{array}{cccc}
0 & 0 & 0 & 0\\
0 & 0 & 0 & 0\\
0 & 0 & 0 & 0\\
0 & 0 & 0 & \bigstar
\end{array}\right),
\]
with no higher order terms. This is the classical model of thermoelasticity
discussed in the introduction, compare e.g. \citep{Leis1986,Picard2011,Picard2005}.
In case of the Green-Naghdy model of type III, we have that $k,k^{\ast}>0.$
This yields, that the modified Fourier law becomes 
\[
\left(\partial_{0}^{-1}k^{\ast}+k\right)^{-1}q=-\interior\grad\theta,
\]
and hence, we are in the situation of Subsection 2.1 with 
\begin{align}
n_{0} & =\zeta_{0}=a_{0}=a_{1}(\partial_{0}^{-1})=0\text{ and }\label{eq:zeros_GNm_I_and_III}\\
a_{2}(\partial_{0}^{-1}) & =\left(\partial_{0}^{-1}k^{\ast}+k\right)^{-1}.\nonumber 
\end{align}
 Thus, the zero-patterns of $M(0)$ and $M_{1}(0)$ look the same
as above with the difference that higher order terms appear (i.e.
$M_{1}(\partial_{0}^{-1})\ne M_{1}(0)$). The well-posedness result
reads as follows.
\begin{cor}
\label{cor:GNmI,III}Let $M_{0}$, $M_{1}(\partial_{0}^{-1})$ and
$A$ be given by \eqref{eq:M_0-rational}, \eqref{eq:M_1-rational},
and a skew-selfadjoint restriction of \eqref{eq:A_without}, respectively.
Assume that $\rho_{0},\nu\in L^{\infty}(\Omega),$ $k\in L(L^{2}(\Omega)^{3})$,
$C\in L(L_{\textnormal{sym}}^{2}(\Omega)^{3\times3})$ are selfadjoint
and strictly positive definite, $k^{*}\in L\left(L^{2}(\Omega)^{3}\right)$
as well as \eqref{eq:zeros_GNm_I_and_III}. Then \eqref{eq:thermo_abstract}
is well-posed in the sense of Theorem \ref{thm:sol_th}.\end{cor}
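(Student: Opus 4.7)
The plan is to invoke Theorem \ref{thm:well_posedness_thermoelast}, exactly as in the preceding corollaries. The hypotheses on $\rho_{0}$, $\nu$ and $C$ are imposed directly, and all off-diagonal coupling coefficients in the $(\Theta,q)$-block vanish by \eqref{eq:zeros_GNm_I_and_III}. What remains is to verify the two structural conditions on the heat-flux block of $M_{0}$: that $a_{0}$ is strictly positive definite on its range, and that $\Re a_{2}(0)$ is strictly positive definite on $N(a_{0})$.

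Since $a_{0} = 0$ by \eqref{eq:zeros_GNm_I_and_III}, the first condition is vacuous (its range equals $\{0\}$) and $N(a_{0}) = L^{2}(\Omega)^{3}$. Hence it suffices to show that $\Re a_{2}(0)$ is strictly positive definite on the whole space. By construction $a_{2}(0) = k^{-1}$, and since $k$ is selfadjoint and strictly positive definite by assumption, so is its inverse; this gives $\Re a_{2}(0) = k^{-1} \geq c > 0$ on $L^{2}(\Omega)^{3}$ for some $c > 0$, as required by Theorem \ref{thm:well_posedness_thermoelast}.

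Before applying the theorem I would also check that $a_{2}$ really is a bounded $L(L^{2}(\Omega)^{3})$-valued rational function on some ball $B(0,r)$ around the origin, as demanded by the general framework. This is the only mildly delicate step: since $k$ is boundedly invertible, a Neumann-series estimate shows that $k + z k^{\ast}$ is boundedly invertible whenever $|z|\,\|k^{-1}\|\,\|k^{\ast}\| < 1$ (any $r > 0$ works in the Type~I case $k^{\ast} = 0$), and on such a ball $a_{2}(z) = (zk^{\ast} + k)^{-1}$ depends rationally and boundedly on $z$. The hard part, if one can call it that, is keeping this invertibility uniform in $z$, but it follows at once from strict positive definiteness of $k$. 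With these observations assembled, Theorem \ref{thm:well_posedness_thermoelast} applies and delivers the claimed well-posedness.
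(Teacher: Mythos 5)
Your proposal is correct and follows essentially the same route as the paper: since $a_{0}=0$ by \eqref{eq:zeros_GNm_I_and_III}, the only condition of Theorem \ref{thm:well_posedness_thermoelast} to check is that $\Re a_{2}(0)=\Re k^{-1}$ is strictly positive definite on $N(a_{0})=L^{2}(\Omega)^{3}$, which follows from the strict positive definiteness of $k$. Your additional verification that $a_{2}(z)=(zk^{\ast}+k)^{-1}$ is a bounded rational $L(L^{2}(\Omega)^{3})$-valued function near $0$ is a reasonable extra care that the paper leaves implicit.
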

\begin{proof}
By the strict positive definiteness of $k$, it follows that $\Re a_{2}(0)=\Re k^{-1}$
is strictly positive on $L^{2}(\Omega)^{3}=N(0)=N(a_{0})$. Now, apply
Theorem \ref{thm:well_posedness_thermoelast} to obtain the required
result.
\end{proof}

\subsubsection{Dual phase-lag model of type- I }

We conclude our considerations by the study of the DPL-I\textbf{ }model.
Here again, we assume (\ref{eq:Hooke}) and (\ref{eq:entropy}) to
hold, while Fourier's law (\ref{eq:Fourier}) is replaced by 
\begin{align*}
\left(1+n_{1}\partial_{0}\right)q & =-\kappa\left(1+n_{2}\partial_{0}\right)\interior\grad\theta,
\end{align*}
with two phase-lags $n_{1},n_{2}\in\mathbb{R}\setminus\{0\}.$ The
latter gives
\begin{align*}
-\interior\grad\theta & =\left(1+n_{2}\partial_{0}\right)^{-1}\left(1+n_{1}\partial_{0}\right)\kappa^{-1}q\\
 & =\left(\partial_{0}^{-1}+n_{2}\right)^{-1}\left(\partial_{0}^{-1}+n_{1}\right)\kappa^{-1}q,
\end{align*}
which shows that we are in the case 
\begin{align}
n_{0} & =\zeta_{0}=a_{0}=a_{1}(\partial_{0}^{-1})=0\text{ and }\label{eq:zeros_Dual_I}\\
a_{2}(\partial_{0}^{-1}) & =\left(\partial_{0}^{-1}+n_{2}\right)^{-1}\left(\partial_{0}^{-1}+n_{1}\right)\kappa^{-1}.\nonumber 
\end{align}

Therefore, the zero-pattern of $M(0)$ is
\[
M\left(0\right)=\left(\begin{array}{cccc}
\bigstar & 0 & 0 & 0\\
0 & \bigstar & \bigstar & 0\\
0 & \bigstar & \bigstar & 0\\
0 & 0 & 0 & 0
\end{array}\right)
\]
and the zero-pattern of $M_{1}(0)$ is
\[
M_{1}(0)=\left(\begin{array}{cccc}
0 & 0 & 0 & 0\\
0 & 0 & 0 & 0\\
0 & 0 & 0 & 0\\
0 & 0 & 0 & \bigstar
\end{array}\right).
\]

Similarly to Corollary \ref{cor:GNmI,III}, using \eqref{eq:zeros_Dual_I}
instead of \eqref{eq:zeros_GNm_I_and_III} and imposing $n_{1}\cdot n_{2}>0$,
we get the corresponding well-posedness result also for this type
of equation.

\textbf{\noun{4. Conclusion}}

Various models of thermoelasticity are writen as evolutionary problems
and their well posedness results are shown. We formulate an abstract
model with rational material laws which is of basic Green-Lindsay
type model and we prove its well-posedness. All other models are shown
to be recovered from this abstract one and we find the conditions
which yield well posedness.

\textbf{\noun{Anowledgement: }}One of the authors (SM) thankfully
acknowledges the extended facilities provided by the Institute of
Analysis, Technical University- Dresden, Germany during the period
when the present work was carried out.

\end{document}